\newtheorem{theorem}{Theorem}
\newtheorem{lemma}{Lemma}
\newtheorem{proposition}{Proposition}
\newtheorem{definition}{Definition}[section]
\newtheorem{remark}{Remark}[section]
\newtheorem{assumption}{Assumption}[section]
\newcommand{\fref}[1]{Fig.~\ref{#1}}
\newcommand{\aref}[1]{Assump.~\ref{#1}}
\newcommand{\lref}[1]{Lemma~\ref{#1}}
\newcommand{\tref}[1]{Theorem~\ref{#1}}
\newcommand{\cref}[1]{Corollary~\ref{#1}}
\newcommand{\dref}[1]{Definition~\ref{#1}}
\newcommand{\pref}[1]{Proposition~\ref{#1}}
\newcommand{\K}[1]{\mathbb{K}_{#1}}
\newcommand{\mat}[1]{\begin{bmatrix}\,#1\,\end{bmatrix}}
\newcommand{\norm}[2]{\lVert#1\rVert_{#2}}
\newcommand{\inner}[3]{\langle#1, #2\rangle_{#3}}
\DeclarePairedDelimiter{\Diag}{ \text{diag}\{ }{ \} }
\DeclareMathOperator*{\argmin}{argmin}
\DeclareMathOperator{\Ima}{Im}
\title{\LARGE \bf A Unified Stability Analysis of Safety-Critical Control using Multiple Control Barrier Functions}
\author{Matheus F. Reis, José P. Carvalho and A. Pedro Aguiar
\thanks{The authors are with the Research Center for Systems and Technologies (SYSTEC-ARISE), Faculdade de Engenharia, Universidade do Porto, Rua Dr. Roberto Frias, s/n 4200-465, Porto, Portugal
{\tt\small \{matheus.reis, jpcarvalho, apra\}@fe.up.pt}}
}
\begin{document}

\maketitle
\thispagestyle{empty}
\pagestyle{empty}

\begin{abstract}
Ensuring liveness and safety of autonomous and cyber-physical systems remains a fundamental challenge, particularly when multiple safety constraints are present. 
This letter advances the theoretical foundations of safety-filter Quadratic Programs (QP) and Control Lyapunov Function (CLF)-Control Barrier Function (CBF) controllers 
by establishing a unified analytical framework for studying their stability properties. 
We derive sufficient feasibility conditions for QPs with multiple CBFs and formally 
characterize the conditions leading to undesirable equilibrium points at possible intersecting safe set boundaries.
Additionally, we introduce a stability criterion for equilibrium points, providing a systematic approach to identifying conditions under which they can be destabilized or eliminated. 
Our analysis extends prior theoretical results, deepening the understanding of the conditions of feasibility and stability of CBF-based safety filters and the CLF-CBF QP framework.
\end{abstract}


\section{INTRODUCTION}

Due to recent technological advancements, cyber-physical and autonomous systems, including self-driving vehicles and robotics, are becoming increasingly prevalent in our lives. 
While these systems exhibit higher autonomy levels, the increase in agency also brings more unpredictable and catastrophic failures \cite{dalrymple2024guaranteedsafeaiframework}, 
making it ever more crucial to design controllers that are both performant and provably safe.

The performance of a controller can be associated with its liveness, defined as asymptotic stability towards a set of goal states. 
On the other hand, safety can be defined as the requirement of forward invariance of the system trajectories with respect to a specified safe set. 
Combining and guaranteeing the fulfillment of these two requirements is still an open problem.
Common approaches are the safety filter architecture \cite{Hsu2024} and the CLF-CBF minimum-norm controller \cite{Ames2014}.
 In the former, the liveness and safety requirements are decoupled through a performance (nominal) controller that seeks to ensure liveness, 
 while a safety controller modifies the control law to guarantee safety at all times \cite{Wabersich2023}, making use of 
 Control Barrier Functions (CBFs) \cite {Notomista2018} to model the safety requirements.
 In the latter, safety and liveness are directly combined within an unified control framework \cite{Ames2014}, using 
 Control Lyapunov Functions (CLFs) and CBFs.
%
%


However, it is shown in \cite{Reis_LCSS} that due to the conflicting objectives of safety and liveness, the CLF-CBF QP framework 
can introduce undesirable stable equilibrium points on the safe set boundary. This means that while the framework can guarantee safety, 
it does not guarantee liveness as the system's trajectories can get stuck at undesirable equilibrium points, resulting in deadlocks. 
Research has aimed to mitigate or avoid these deadlocks in both the CLF-CBF framework and CBF-based safety filters.

In \cite{Notomista2022}, a method is proposed to solve the deadlock problem for the safety-filter considering multiple non-convex unsafe regions modeled by CBFs. 
Undesirable stable equilibria are avoided by using a nonlinear transformation that maps the system state into a new domain called the ``ball world'', 
where all non-convex CBF boundaries are converted into moving $n$-spheres and dynamically avoid the state trajectory.
In \cite{Reis_LCSS}, the authors propose a solution based on a rotating CLF, so that equilibrium points are removed in the case of a single convex unsafe set. 
The idea of modifying the CLF to avoid deadlocks is latter extended in \cite{Reis2025}, where the theory of linear matrix polynomials 
was used to compute a CLF that does not cause deadlocks, considering the class of linear systems and quadratic CLFs and CBFs.
%
%
%
Finally, in \cite{kim2025learningrefineinputconstrained}, the authors address the deadlock problem by proposing a 
learning-based framework that adapts the CBF class $\mathcal{K}$ function online. Using a probabilistic ensemble neural network (PENN), 
the framework predicts safety and performance metrics, including deadlock time, which is then minimized to mitigate deadlocks.

This paper builds upon the work \cite{Reis2025} and presents a generalized theoretical understanding for the conditions for the formation 
of undesirable boundary equilibrium points with the CBF-based safety filters and CLF-CBF frameworks for safety-critical control.
Particularly:
\begin{itemize}
    \item Addresses the stability theory for the closed-loop system of affine nonlinear systems with both the safety-filter QP and CLF-CBF mininum-norm QP using a single, generalized QP framework for both safety-critical controllers.
    \item Derives sufficient conditions for the feasibility of the safety-filter QP and CLF-CBF QP with multiple CBFs.
    \item Introduces conditions for existence of undesirable equilibrium points that are valid both for the safety-filter QP and for the CLF-CBF QP, considering multiple, possibly overlapping CBF unsafe sets.  
    \item Derives a novel condition for the stability of boundary equilibrium points of this type with a clear geometric intuition.
\end{itemize}
\section{PRELIMINARIES}

{\it Notation}:
\label{sec:notation}
%
%
Given a matrix $A \in \mathbb{R}^{n \times m}$ or vector $v \in \mathbb{R}^n$
$[A]_{k} \in \mathbb{R}^n$ denotes its $k$-th column and $[v]_{k} \in \mathbb{R}$ denotes 
its $k$-th element, while $\Diag{ \alpha_1 ,\, \cdots ,\, \alpha_r }$ is block matrix with the diagonal stacking of $r$ scalars or matrices $\alpha_1 ,\, \cdots ,\, \alpha_r$.
%
%
Matrix $I_n \in \mathbb{R}^{n \times n}$ is the $n \times n$ identity matrix.
%
%
%
$L_g f$ is the Lie derivative of a function $f$ along a function $g: \mathbb{R}^{n} \rightarrow \mathbb{R}^{n \times m}$, that is, $L_g f = \nabla f^\mathsf{T} g \in \mathbb{R}^{m}$.
%
%
The inner product between two vectors $u, v \in \mathbb{R}^n$ induced by a positive semidefinite matrix $X = X^\mathsf{T} \ge 0$ is given by $\inner{u}{v}{X} = u^\mathsf{T} X \, v$.
This inner product induces a norm $\norm{v}{X}^2 = \inner{v}{v}{X} = v^\mathsf{T} X \, v$ over $\mathbb{R}^n$. 
The standard inner product is then $\inner{u}{v}{} = \inner{u}{v}{I_n}$, with standard Euclidean norm $\norm{v}{}^2 = \norm{v}{I_n}^2$.
The orthogonal complement of a subspace $\mathcal{W}$ is denoted by $\mathcal{W}^\perp$, with orthogonality dependent on an inner product $\inner{\cdot}{\cdot}{X}$.
%
%
%
%

\subsection{Control Lyapunov (CLFs) and Barrier Functions (CBFs)}

Consider the nonlinear control affine system
\begin{align}
\dot{x} = f(x) + g(x) u
\label{eq:affine_nonlinear}
\end{align}
where $x \in \mathbb{R}^n$, $u \in \mathbb{R}^m$ are the system state and control input, respectively, and 
$f: \mathbb{R}^n \rightarrow \mathbb{R}^n$, $g: \mathbb{R}^n \rightarrow \mathbb{R}^{n \times m}$ are locally Lipschitz.
%
%

%

\begin{definition}[CLFs]
\label{def:CLF}
A positive definite function $V$ is a {\it control Lyapunov function} (CLF) for system \eqref{eq:affine_nonlinear} if it satisfies:
%
\begin{align}
\inf_{u\in \mathbb{R}^m} \left[L_f V(x) + L_g V u \right] \le -\gamma(V(x)) \nonumber
\end{align}
where $\gamma: \mathbb{R}_{\ge 0} \rightarrow \mathbb{R}_{\ge 0}$ is a class $\mathcal{K}$ function \cite{Khalil2002}.
\end{definition}
\dref{def:CLF} implies that there exists a set of stabilizing controls that makes the CLF strictly decreasing everywhere outside its global minimum $x_0 \in \mathbb{R}^n$, given by 
$\K{V}(x) = \{ u \!\in\! \mathbb{R}^m: L_f V + L_g V u \le - \gamma(V) \} $.
%
%


\begin{definition}[Safety]
\label{safe}
The trajectories of a given system are safe with respect to a set $\mathcal{C}$ if $\mathcal{C}$ is forward invariant, meaning that for every $x(0) \in \mathcal{C}$, $x(t) \in \mathcal{C}$ for all $t>0$.
\end{definition} 
%
Consider $N$ subsets $\mathcal{C}_1, \dots, \mathcal{C}_N \subset \mathbb{R}^n$  defined by the superlevel set of $N$ 
continuously differentiable functions $h_i : \mathbb{R}^n \rightarrow \mathbb{R}$, $i =1,2,\ldots, N$
%
%
The corresponding $i$-th boundary is given by $\partial \mathcal{C}_i = \{x \in \mathbb{R}^n: h_i(x) = 0\}$.

\begin{definition}[CBFs]
Let $\mathcal{C}_i = \{x \in \mathbb{R}^n: h_i(x) \ge 0\}$. Then $h_i(x)$ is a (zeroing) {\it Control Barrier Function} 
(CBF) for \eqref{eq:affine_nonlinear} if there exists a locally Lipschitz extended class $\mathcal{K}_{\infty}$ function \cite{Khalil2002}
$\alpha_i : \mathbb{R} \rightarrow \mathbb{R}$ such that
\begin{align}
\sup_{u\in \mathbb{R}^m} \left[L_f h_i(x) + \inner{L_g h_i(x)}{u}{} \right] \ge -\alpha_i(h_i(x)) 
\nonumber
\end{align}
\end{definition}
This definition means that there exists a set of safe controls allowing the $i$-th CBF to decrease on in the interior of its safe set $\text{int}(\mathcal{C}_i)$,
but not on its boundary $\partial \mathcal{C}_i$, given by 
$\K{h_i}(x) \!=\! \{ u \in \mathbb{R}^m\!:\! L_f h_i \!+\! L_g h_i u \!+\! \alpha_i(h_i) \ge 0 \}$.
%
%
%
The composite safe set associated to all the $N$ CBFs is simply $\mathcal{C} = \bigcap^N_{i=1} \mathcal{C}_i$
and the set of controls rendering $\mathcal{C}$ forward invariant is the intersection 
$\K{h}(x) = \bigcap^N_{i=1} \K{h_i}(x)$.
%

\subsection{QP-Based Safety-Critical Controllers}

Consider the closed-loop system for \eqref{eq:affine_nonlinear} 
\begin{align}
\dot{x} = f_{cl}(x) := f(x) + g(x) u^\star(x)
\label{eq:closed_loop}
\end{align}
with a state-feedback control law $u^\star(x)$.
%
%
A common approach for safety-critical control is the safety-filter algorithm, which minimally 
modifies a given stabilizing state-feedback control law $u_{nom}(x)$ to achieve safety.
%
%
The safety-filter effectively generates the ``closest'' safe control $u^{\star}(x) \in \K{h}(x)$ 
to the stabilizing control $u_{nom}(x)$.

An alternative approach that also makes use of QPs is the minimum-norm CLF-CBF QP by \cite{Ames2014}:
\begin{align}
u^\star(x), \delta^\star(x) &= \argmin_{(u,\delta)} \frac{1}{2} (\Delta u)^\mathsf{T} H(x) (\Delta u) + \frac{1}{2} p \delta^2 
\label{eq:generalized_QP} \\
s.t. \, 
&L_f V + \inner{L_g V(x)}{u}{} \le -\gamma(V) + \delta \tag{CLF} \\
&L_f h_i + \inner{L_g h_i(x)}{u}{} \ge -\alpha_i(h_i) \tag{CBFs}
\end{align}
$i \in \{ 1, \cdots, N \}$, $H: \mathbb{R}^n \rightarrow \mathbb{R}^{n \times n}$ 
being a symmetric positive definite matrix function of the state, $p > 0$ and (for now) $\Delta u = u$.
%
The relaxation variable $\delta$ in the CLF constraint {\it softens} the stabilization objective, aiming to maintain the feasibility of the QP.
%
%
If feasible, the feedback controller \eqref{eq:generalized_QP} with $\Delta u = u$ generates a minimum-norm stabilizing and safe control, 
guaranteeing local stability of the origin and safety of the closed-loop system trajectories with respect to the composite safe set $\mathcal{C}$.

As pointed out by the works \cite{Reis_LCSS,Reis2025,TanDimos2024},  neither the safety-filter
nor the CLF-CBF QP \eqref{eq:generalized_QP} (with $\Delta u = u$) can guarantee global stabilization of trajectories 
towards the origin for the closed-loop system \eqref{eq:closed_loop}, meaning that trajectories could converge towards 
undesirable equilibrium points.
%
Here, our objective is to study the solutions, existence conditions and stability of equilibrium points of the closed-loop system formed by 
a generalized QP controller given by \eqref{eq:generalized_QP} with $\Delta u = u - u_{nom}(x)$, where $u_{nom}$ is the stabilizing control 
law from the safety-filter QP.
%
%
The structure of controller \eqref{eq:generalized_QP} with $\Delta u = u - u_{nom}(x)$ 
allows for the generalization of the safety-filter and the CLF-CBF QPs into a single framework:
\begin{enumerate}
    \item with $H(x) = I_m$ and $V = 0$ (without CLF), we recover the usual safety-filter QP. 
    In this case, the optimal solution for the slack variable is always $\delta^\star = 0$.
    \item with $u_{nom}(x) = 0$, we recover the usual CLF-CBF QP.
\end{enumerate}

If feasible, the solution of the generalized QP \eqref{eq:generalized_QP} is guaranteed $u^\star(x) \in \K{h}(x)$, and possibly 
close to the stabilizing set $\K{V}(x)$. However, due to the slack variable, it is not possible to strictly guarantee that
$u^\star(x) \in \K{V}(x) \cap \K{h}(x) \,\, \forall x \in \mathbb{R}^n$. 
That means that safety with respect to $\mathcal{C}$ is achieved, but stabilization is (possibly) hampered.

\begin{assumption}
\label{assumption:initial_state}
The initial state $x(0) \in \mathbb{R}^n$ and the origin $0 \in \mathbb{R}^n$ are contained in the safe set $\mathcal{C}$, that is, 
$h_i(x(0)) \ge 0$ and $h_i(0) \ge 0$ for all $i \in \{ 1, \cdots, N \}$.
\end{assumption}
%
\begin{theorem}
\label{thm:QP_feasibility}
Under \aref{assumption:initial_state}, consider the following assumptions:\\
{\bf(i)} There is only one CBF ($N=1$).\\
{\bf(ii)} System \eqref{eq:affine_nonlinear} is driftless: $f(x) = 0 \,\,\, \forall x \in \mathbb{R}^n$.\\
{\bf(iii)} Considering any number of CBFs ($N>1$),
\begin{align}
    b_2(x) &\in \Ima{U^\mathsf{T} \mathcal{P}_V G U} \label{eq:feasibility_cond} \\
    b_2(x) &= U^\mathsf{T} \left( c^{-1} \gamma(V) G \nabla V - \mathcal{P}_V f_{nom} \right) - \bar{\alpha} \nonumber
\end{align}
where $c \!=\! p^{-1} \!+\! \norm{\nabla V}{G}^2 > 0$, 
$G(x) \!=\! g(x) H(x)^{-1} g(x)^\mathsf{T}$, $f_{nom}(x) \!=\! f(x) \!+\! g(x) u_{nom}(x)$, 
$\mathcal{P}_V = I - c^{-1} G \nabla V \nabla V^\mathsf{T}$,
$\bar{\alpha} \!=\! \mat{\alpha_1(h_1) \!\!\!&\!\!\! \cdots \!\!\!&\!\!\! \alpha_N(h_N)}^\mathsf{T}$ and
$U(x) \!=\! \mat{\nabla h_1 \!\!\!&\!\!\! \cdots \!\!\!&\!\!\! \nabla h_N}$.\\
Then, QP \eqref{eq:generalized_QP} is feasible under Assumptions {\bf (i)} {\it or} {\bf (ii)} {\it or} {\bf (iii)}.
\end{theorem}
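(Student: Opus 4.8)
The plan is to exploit the structural fact that the relaxation variable $\delta$ enters \eqref{eq:generalized_QP} with no sign or magnitude constraint, so the (CLF) inequality can always be satisfied after the fact by enlarging $\delta$. Consequently, feasibility of the program is equivalent to producing a single $u\in\mathbb{R}^m$ that meets the $N$ (CBF) inequalities at the state in question. Since feasibility is a pointwise-in-$x$ property, it is enough to verify it where the closed-loop trajectory lives, i.e., on $\mathcal{C}$ under Assumption~\ref{assumption:initial_state}. I would then treat the three hypotheses separately.

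For (i), $N=1$, there is a single affine inequality in $u$: if $L_g h_1\neq 0$ its left-hand side is unbounded above, so some $u$ works; if $L_g h_1=0$, the CBF property gives $L_f h_1=\sup_{u}[L_f h_1+L_g h_1 u]\geq-\alpha_1(h_1)$, so every $u$ works; appending a sufficiently large $\delta$ closes this case. For (ii), $f\equiv 0$, one has $L_f h_i=L_f V=0$, and on $\mathcal{C}$ we have $h_i\geq 0$, hence $\alpha_i(h_i)\geq 0$ since $\alpha_i$ is extended class $\mathcal{K}_\infty$; then $u=0$ satisfies every (CBF) inequality and $\delta=\gamma(V)$ satisfies (CLF).

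The core case is (iii). Here I would search for a feasible point inside the family of candidates dictated by the KKT stationarity conditions of \eqref{eq:generalized_QP}, namely $u(\lambda,\lambda_0)=u_{nom}+H^{-1}g^\mathsf{T}(U\lambda-\lambda_0\nabla V)$ with $\delta=\lambda_0/p$, for $\lambda\in\mathbb{R}^N$ and $\lambda_0\in\mathbb{R}$, and I would require that (CLF) and all (CBF) constraints hold with equality. Imposing the (CLF) equality and solving it for $\lambda_0$ gives $\lambda_0=c^{-1}\bigl(\nabla V^\mathsf{T} f_{nom}+\nabla V^\mathsf{T} G U\lambda+\gamma(V)\bigr)$, where $c=p^{-1}+\norm{\nabla V}{G}^2>0$ is precisely the coefficient that must be inverted. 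Substituting this back into the stacked (CBF) equalities $U^\mathsf{T} f_{nom}+U^\mathsf{T} G U\lambda-\lambda_0 U^\mathsf{T} G\nabla V+\bar\alpha=0$ and regrouping — recognizing $U^\mathsf{T}\bigl(G-c^{-1}G\nabla V\nabla V^\mathsf{T} G\bigr)U=U^\mathsf{T}\mathcal{P}_V G U$ in the $\lambda$-quadratic part and $U^\mathsf{T}\bigl(I-c^{-1}G\nabla V\nabla V^\mathsf{T}\bigr)f_{nom}=U^\mathsf{T}\mathcal{P}_V f_{nom}$ in the constant part — I expect the whole system to telescope exactly to $U^\mathsf{T}\mathcal{P}_V G U\,\lambda=b_2$. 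Hypothesis \eqref{eq:feasibility_cond} is exactly the condition $b_2\in\Ima(U^\mathsf{T}\mathcal{P}_V G U)$ making this solvable; picking any solution $\lambda^\star$ and the induced $\lambda_0^\star$, the point $(u(\lambda^\star,\lambda_0^\star),\delta(\lambda_0^\star))$ satisfies every constraint with equality, hence is feasible.

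The main obstacle I anticipate is the algebraic bookkeeping in case (iii): eliminating $\delta$ and $\lambda_0$ and then extracting the oblique projection $\mathcal{P}_V=I-c^{-1}G\nabla V\nabla V^\mathsf{T}$ out of both the quadratic-in-$\lambda$ and the affine terms so that everything collapses to $U^\mathsf{T}\mathcal{P}_V G U\lambda=b_2$ rather than to a messier equivalent; the strict positivity $c>0$ is what legitimizes the elimination of $\lambda_0$. I would also emphasize a point distinguishing this from a stability/optimality argument: for mere feasibility one does not need $\lambda^\star\geq 0$, only that the constructed point lies in the constraint set, which forcing equalities guarantees. Finally, I would note that (i) and (ii) are genuinely independent sufficient conditions, not special cases of (iii): when $L_g h_1=0$ in (i) (and, analogously, when $U=0$ under (ii)) one has $\Ima(U^\mathsf{T}\mathcal{P}_V G U)=\{0\}$ while $b_2$ need not vanish, so these situations must be argued directly as above.
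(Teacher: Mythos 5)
Your proof is correct, and for part \textbf{(iii)} it takes a genuinely different route from the paper's. The paper argues through the dual: it forms the Lagrangian, derives the dual QP $\max_{\bar\lambda\ge 0}\,-\tfrac12\bar\lambda^\mathsf{T}A\bar\lambda+\bar\lambda^\mathsf{T}b$, and applies Gaussian elimination to the augmented system $[\,A\,|\,b\,]$ to show that $b\in\Ima(A)$ reduces to \eqref{eq:feasibility_cond}; primal feasibility is then inferred from boundedness of the dual cost, a step that implicitly leans on QP duality. You instead build an explicit primal feasible point: parametrize $u$ by the stationarity form, force the CLF and all CBF constraints to hold with equality, eliminate $\delta$ and $\lambda_0$ (legitimized by $c\ge p^{-1}>0$), and arrive at exactly the same reduced linear system $U^\mathsf{T}\mathcal{P}_V G U\,\lambda=b_2$ whose solvability is the hypothesis. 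Your algebra is right --- the factor $\mathcal{P}_V$ does pull out of both the quadratic-in-$\lambda$ and the affine terms --- and your observation that no sign constraint on $\lambda$ is needed for mere feasibility is precisely what makes the construction legitimate. The two arguments are two faces of the same elimination; yours buys a self-contained proof that sidesteps the duality theorem and exhibits a feasible point with every constraint active, while the paper's dual view ties the condition to boundedness of the dual cost. For \textbf{(i)} and \textbf{(ii)} the paper merely cites \cite{Ames2014} and \cite{Reis2025}; your direct arguments (unboundedness in $u$ of the single affine CBF constraint when $L_gh_1\ne 0$, the CBF definition when $L_gh_1=0$, and $u=0$ on $\mathcal{C}$ in the driftless case, where \aref{assumption:initial_state} supplies $h_i\ge 0$) are sound and consistent with those references, as is your closing remark that \textbf{(i)}--\textbf{(ii)} are not subsumed by \textbf{(iii)}.
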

\begin{proof}
The proofs of {\bf(i)} and {\bf(ii)} can be found in \cite{Ames2014} and \cite{Reis2025}, respectively.
To establish the proof of {\bf(iii)}, we begin by formulating the Lagrangian associated with the QP \eqref{eq:generalized_QP} 
\begin{align}
\mathcal{L}(u, \delta, \bar{\lambda}) \!&=\! \frac{1}{2} \left( \norm{\Delta u}{H}^2 \!+\! p \delta^2 \right) 
\label{eq:general_lagrangian} \\
&\!+ \lambda_0 ( L_f V \!+\! \inner{L_g V}{u}{} \!+\! \gamma(V) \!-\! \delta ) \nonumber \\
&\!- \sum^N_{i=1} \lambda_i ( L_f h_i \!+\! \inner{L_g h_i}{u}{} \!+\! \alpha_i(h_i) ) 
\nonumber
\end{align}
where $\lambda_i \ge 0$ and 
$\lambda \!=\! \mat{ \lambda_1 \!\!\!&\!\!\! \cdots \!\!\!&\!\!\! \lambda_N }^\mathsf{T} \in \mathbb{R}^N_{\ge 0}$,
$\bar{\lambda} \!=\! \mat{ \lambda_0 \!\!&\!\! \lambda^\mathsf{T} }^\mathsf{T} \in \mathbb{R}^{N+1}_{\ge 0}$
are vectors of KKT multipliers associated to the optimization problem.
Using matrix $U$, the stationarity KKT conditions give the following solutions for the QP:
%
%
%
\begin{align}
u^\star(x) &= u_{nom} + H^{-1} g^\mathsf{T} \left( -\lambda_0 \nabla V + U \lambda \right) \label{eq:control_solution} \\
\delta^\star(x) &= p^{-1} \lambda_0 \label{eq:delta_solution}
\end{align}
The dual function $g(\bar{\lambda}) = \min_{(u,\delta)} \mathcal{L}(u, \delta, \bar{\lambda})$ associated to the QP can be obtained by 
substituting \eqref{eq:control_solution} into \eqref{eq:general_lagrangian}, yielding the following dual QP:
\begin{align}
\max_{\bar{\lambda} \in \mathbb{R}^N_{\ge 0} }& -\frac{1}{2} 
\bar{\lambda}^\mathsf{T} A(x) \bar{\lambda}
+ \bar{\lambda}^\mathsf{T} b(x)
\label{eq:dualQP} \\
A(x) &= 
\begin{bmatrix}
c \!\!&\!\! - \nabla V^\mathsf{T} G U \\
- U^\mathsf{T} G \nabla V \!\!&\!\! U^\mathsf{T} G U
\end{bmatrix} \,, \,\, 
b(x) = 
\begin{bmatrix}
F_{V} \\ 
-F_{h}
\end{bmatrix} \nonumber
\end{align}
with $F_{V} = L_{f_{nom}} V + \gamma(V)$, $F_{h} = U^\mathsf{T} f_{nom} + \bar{\alpha}$.
Notice that the dual cost in \eqref{eq:dualQP} is bounded from above if and only if
$
b(x) \in \Ima{A(x)}
$. 
In that case, the primal QP \eqref{eq:generalized_QP} is feasible. Applying Gauss elimination to the 
augmented matrix $\mat{ A(x) \!\!&\!\!\!\!|\!\!\!\!&\!\! b(x) }$ yields
\begin{align}
    \begin{bmatrix}
        1 \!\!&\!\! - c^{-1} \nabla V^\mathsf{T} G U \!\!\!&|&\!\!\! c^{-1} F_V \\
        0 \!\!&\!\!  U^\mathsf{T} \mathcal{P}_V G U \!\!\!&|&\!\!\! b_2(x)
    \end{bmatrix}
    \label{eq:augmented_system}
\end{align}
%
%
Then, from \eqref{eq:augmented_system}, the condition $b_2(x) \in \Ima{U^\mathsf{T} \mathcal{P}_V G U}$ 
as stated in \eqref{eq:feasibility_cond} is equivalent to $b(x) \in \Ima{A(x)}$, meaning that the primal 
QP is feasible under this condition.
\end{proof}

\tref{thm:QP_feasibility}{\bf(iii)} provides a sufficient condition for the feasibility of QP \eqref{eq:generalized_QP}: 
when \eqref{eq:feasibility_cond} holds, the corresponding dual QP cost is bounded, which means \eqref{eq:generalized_QP} is feasible.  
Feasibility condition \eqref{eq:feasibility_cond} is of particular importance when assumptions {\bf(i)} and {\bf(ii)} fail,
giving a sufficient condition for the feasibility of QP \eqref{eq:generalized_QP} in the case when multiple safety 
objectives are simultaneously required.
\section{Closed-loop System Analysis}

\subsection{Existence of Equilibrium Points}

In this section, we extend a result from \cite{Reis_LCSS}, regarding the existence of equilibrium points in the 
safety-filter QP or in the CLF-CBF QP when multiple CBF constraints are present. The results are conditioned to 
the feasibility of the QP \eqref{eq:generalized_QP}: that is, it is assumed that at least one of the conditions 
of \tref{thm:QP_feasibility} holds.
%

\begin{definition}[Equilibrium Manifold]
\label{def:equilibrium_manifold}
Let the set $\mathcal{A} = \{a_1, \cdots, a_r\} \subset 2^{\{1, \cdots, N\}}$ be a collection of $1 \le r \le N$ CBF 
indexes corresponding to $r$ CBFs $\{ h_{a_1}, \cdots, h_{a_r} \}$.
Define the vector field $f_{\mathcal{A}}: \mathbb{R}^n \times \mathbb{R}^r_{\ge 0} \rightarrow \mathbb{R}^{n}$:
\begin{align}
f_{\mathcal{A}}(x, \lambda) &= f_{nom} - p \gamma(V) G \nabla V + G U_{\!\mathcal{A}} \lambda \label{eq:fa} \\
U_{\!\mathcal{A}} &= \mat{ \nabla h_{a_1} \!\!&\!\! \cdots \!\!&\!\! \nabla h_{a_r} } \in \mathbb{R}^{n \times r} \label{eq:Umatrix}
\end{align}
%
%
\end{definition}
As will be demonstrated in the next sections, \eqref{eq:fa} and its Jacobian with respect to $x$ 
will be of central importance to characterize the existence and stability conditions for the 
equilibrium points of the closed-loop system.
\begin{theorem}[Existence of Equilibrium Points]
\label{theorem:existence_equilibria}
Let \eqref{eq:closed_loop} be the closed-loop system formed by the nonlinear system \eqref{eq:affine_nonlinear} with controller \eqref{eq:generalized_QP}.
Let the set $\mathcal{A}$ as defined in \dref{def:equilibrium_manifold} representing the indexes of 
$r$ overlapping CBF boundaries $\partial \mathcal{C}_{\mathcal{A}} = \bigcap^r_{i=1} \partial \mathcal{C}_{a_i} \ne \emptyset$.
%
%
The equilibrium points of \eqref{eq:closed_loop} come in two distinct types:
%
%
\begin{align}
\mathcal{E}_{\partial \mathcal{C}_{\mathcal{A}}} \!&=\! \partial \mathcal{C}_{\mathcal{A}} \!\cap\! \{ x \in \mathbb{R}^n \,|\, \exists \lambda \!\in\! \mathbb{R}^r_{\ge 0} \text{ s.t. } f_{\mathcal{A}}(x, \lambda) \!=\! 0 \} \label{eq:boundary_equilibria} \\
\mathcal{E}_{int(\mathcal{C})} \!&=\! int(\mathcal{C}) \!\cap\! \{ x \in \mathbb{R}^n \,|\, f_{nom} \!=\! p \gamma(V) G \nabla V \}
\label{eq:interior_equilibria}
\end{align}
where 
$\mathcal{E}_{\partial \mathcal{C}_{\mathcal{A}}}$ is the set of {\it boundary} equilibrium points occurring in $\partial \mathcal{C}_{\mathcal{A}}$ 
and 
$\mathcal{E}_{int(\mathcal{C})}$ is the set of {\it interior} equilibrium points.
Furthermore, defining the region of the state space where only the $a_1, \cdots, a_r$ CBF constraints are simultaneously active as 
$\mathcal{S}_{\mathcal{A}} \subset \mathbb{R}^n$, we have $\mathcal{E}_{\partial \mathcal{C}_{\mathcal{A}}} \subset \mathcal{S}_{\mathcal{A}}$.

%
%
\end{theorem}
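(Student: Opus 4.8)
The plan is to derive the closed-loop vector field from the KKT optimality conditions of the QP \eqref{eq:generalized_QP} and then locate its zeros. Since \eqref{eq:generalized_QP} is a convex QP with affine constraints, whenever it is feasible the KKT conditions are necessary and sufficient for its (unique) minimizer; in particular the stationarity solution \eqref{eq:control_solution} holds together with complementary slackness and $\delta^\star = p^{-1}\lambda_0$. Substituting \eqref{eq:control_solution} into \eqref{eq:closed_loop} gives
\[
f_{cl}(x) \;=\; f_{nom}(x) - \lambda_0\, G \nabla V + G U \lambda ,
\]
with state-dependent multipliers $\lambda_0 \ge 0$, $\lambda \in \mathbb{R}^N_{\ge 0}$. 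An equilibrium is a point with $f_{cl}(x)=0$, and I would split the argument according to whether $x \in \text{int}(\mathcal{C})$ or $x \in \partial\mathcal{C}$, letting $\mathcal{A}$ denote the (possibly empty) set of indices $i$ with $h_i(x)=0$.

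The first ingredient is a complementarity argument that discards the ``wrong'' CBF multipliers. If $\lambda_j > 0$ for some $j$ with $h_j(x) > 0$, complementary slackness forces $\nabla h_j^\mathsf{T} f_{cl}(x) = L_f h_j + L_g h_j u^\star = -\alpha_j(h_j(x)) < 0$, since $\alpha_j$ is an extended class $\mathcal{K}_\infty$ function and $h_j(x)>0$; this contradicts $\nabla h_j^\mathsf{T} f_{cl}(x)=0$ at an equilibrium. Hence $\lambda_j = 0$ for every $j \notin \mathcal{A}$, so $U\lambda = U_{\mathcal{A}} \lambda_{\mathcal{A}}$ with $\lambda_{\mathcal{A}} \in \mathbb{R}^r_{\ge 0}$ (and $U\lambda=0$ in the interior case). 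The same argument underpins the concluding ``furthermore'' claim: the only CBF constraints that can be binding at a point of $\mathcal{E}_{\partial\mathcal{C}_{\mathcal{A}}}$ are those indexed by $\mathcal{A}$, i.e. $\mathcal{E}_{\partial\mathcal{C}_{\mathcal{A}}} \subset \mathcal{S}_{\mathcal{A}}$.

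The second ingredient identifies $\lambda_0$. Away from the global minimizer $x_0$ of $V$ we have $\gamma(V(x))>0$; I would first check that the CLF constraint must then be active at an equilibrium, because if it were inactive ($\lambda_0=0$, $\delta^\star=0$), combining $f_{cl}(x)=0$ with $u^\star = u_{nom} + H^{-1}g^\mathsf{T} U_{\mathcal{A}}\lambda_{\mathcal{A}}$ gives $L_f V + L_g V u^\star = 0$, so feasibility of the CLF constraint would demand $\gamma(V(x)) \le 0$ — a contradiction. With the CLF constraint active, complementary slackness gives $\lambda_0 c = F_V + \nabla V^\mathsf{T} G U_{\mathcal{A}}\lambda_{\mathcal{A}}$, while left-multiplying the equilibrium identity $f_{cl}(x)=0$ by $\nabla V^\mathsf{T}$ gives $\lambda_0 \norm{\nabla V}{G}^2 = L_{f_{nom}}V + \nabla V^\mathsf{T} G U_{\mathcal{A}}\lambda_{\mathcal{A}}$; subtracting and using $c = p^{-1} + \norm{\nabla V}{G}^2$ yields $\lambda_0 = p\,\gamma(V)$ (which holds trivially at $x_0$, where $\gamma(V)=0$ and $f_{nom}=0$).

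Finally I would assemble the cases. In the interior, $U\lambda=0$ and $\lambda_0=p\gamma(V)$ reduce $f_{cl}(x)=0$ to $f_{nom}=p\gamma(V)G\nabla V$, which is exactly \eqref{eq:interior_equilibria}; on $\partial\mathcal{C}_{\mathcal{A}}$, substituting $\lambda_0=p\gamma(V)$ and $U\lambda=U_{\mathcal{A}}\lambda_{\mathcal{A}}$ turns $f_{cl}(x)=0$ into $f_{\mathcal{A}}(x,\lambda_{\mathcal{A}})=0$ with $\lambda_{\mathcal{A}}\in\mathbb{R}^r_{\ge 0}$, which is \eqref{eq:boundary_equilibria}, and the converse inclusion follows by reversing these steps. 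The two sets are disjoint because one lies in $\text{int}(\mathcal{C})$ and the other in $\partial\mathcal{C}$. The part I expect to require the most care is the bookkeeping of constraint activity — ensuring the CLF constraint is active away from $x_0$, and ruling out degenerate multiplier configurations such as equilibria on $\partial\mathcal{C}_{\mathcal{A}}$ that simultaneously touch other boundaries — which likely needs a transversality / generic-position hypothesis on the intersecting CBF boundaries.
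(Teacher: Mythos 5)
Your proof is correct and follows essentially the same route as the paper's: substitute the KKT stationarity solution into the closed-loop dynamics, establish that the CLF constraint must be active at any equilibrium so that $\lambda_0 = p\,\gamma(V)$, and use complementary slackness to zero out the multipliers of non-binding CBF constraints, which forces boundary equilibria onto $\partial\mathcal{C}_{\mathcal{A}}$ and reduces $f_{cl}=0$ to $f_{\mathcal{A}}(x,\lambda)=0$ there and to $f_{nom}=p\,\gamma(V)G\nabla V$ in the interior. The only notable differences are that you prove the activity of the CLF constraint at equilibria directly (the paper defers that step to an external reference), and that the transversality caveat you raise at the end is not actually needed here, since the classification is indexed over every possible active set $\mathcal{A}$ and an equilibrium touching additional boundaries simply belongs to the class of a larger index set.
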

\begin{proof}
Substituting \eqref{eq:control_solution} in \eqref{eq:closed_loop} and using the definition of $G(x)$
yields the following closed-loop dynamics:
\begin{align}
f_{cl}(x) = f_{nom}(x) + G(x) \left( - \lambda_0 \nabla V + U \lambda \right)
\label{eq:closed_loop_dynamics}
\end{align}
At an equilibrium point $x_e \in \mathcal{E}$, $f_{cl}(x_e) = 0$. Applying this condition to \eqref{eq:closed_loop} yields
\begin{align}
f_{nom}(x_e) = G(x_e) \left( \lambda_0 \nabla V(x_e) - U(x_e) \lambda \right)
\label{eq:equilibrium_condition}
\end{align}
%
{\bf Case 1.} Consider the region of the state space where the CLF constraint is {\it inactive}: 
$L_{f_{cl}} V + \gamma(V) - \delta^\star < 0$. Following the exact same steps of {\bf Case 1} 
of Theorem 2 on \cite{Reis2025}, we conclude that no equilibrium points can occur at this region.
{\bf Case 2.} Consider the region where CLF constraint is {\it active}: $L_{f_{cl}} V + \gamma(V) = \delta^\star$.
In the case of the safety-filter, since $V = 0$, the CLF constraint becomes simply $0 \le \delta$ and since the 
optimal solution for the relaxation variable is always $\delta^\star = 0$, the CLF constraint is always active.
%
%
At an equilibrium point $x_e \in \mathcal{E}$, $L_{f_{cl}} V(x_e) = 0$. Therefore, using \eqref{eq:delta_solution}, 
$\gamma(V(x_e)) = \delta^\star(x_e) = p^{-1} \lambda_0$. Then, at any equilibrium point $x_e \in \mathcal{E}$, the 
multiplier associated to the CLF constraint is $\lambda_0(x_e) = p \gamma(V(x_e)) \ge 0$.
Therefore, equation \eqref{eq:equilibrium_condition} yields
\begin{align}
f_{nom}(x_e) \!=\! G(x_e) \left( p \gamma(V(x_e)) \nabla V(x_e) - U(x_e) \lambda \right)
\label{eq:explicit_equilibrium_condition}
\end{align}
%
For the safety-filter, $V = 0$ and $\nabla V = 0$ are valid in \eqref{eq:explicit_equilibrium_condition}.
For the next cases, the CLF constraint is assumed to be active.\\
{\bf Case 3.} Consider the region where exactly $r \le N$ CBF constraints are simultaneously active.
Their corresponding indexes are denoted by the set $\mathcal{A} = \{a_1, \cdots, a_{r}\}$ as defined previously.
Therefore, $L_{f_{cl}} h_{i} + \alpha_i(h_{i}) = 0$, for $i \in \mathcal{A}$.
At an equilibrium point $x_e$ occurring in this region, $L_{f_{cl}} h_i(x_e) = 0$, implying that $h_{a_1}(x_e) = \cdots = h_{a_r}(x_e) = 0$.
Therefore, $x_e$ must lie at the boundary intersection associated to the $r$ active CBFs, that is, 
$x_e \in \partial \mathcal{C}_{\mathcal{A}}$.
%
%
The conclusion is that boundary equilibrium points at $\partial \mathcal{C}_{\mathcal{A}}$ can only occur at 
$\mathcal{S}_{\mathcal{A}}$, that is, the region where {\it only} the CBF constraints corresponding to 
$h_{a_1}, \cdots, h_{a_r}$ are simultaneously active.
Since in this case the remaining CBF constraints are all inactive, $\lambda_i = 0 \,\, \forall i \notin \mathcal{A}$, 
and \eqref{eq:explicit_equilibrium_condition} reduces to 
\begin{align}
f_{nom}(x_e) = G(x_e) \left( p \gamma(V) \nabla V(x_e) - U_{\mathcal{A}}(x_e) \lambda_a \right)
\label{eq:practical_equilibrium_condition}
\end{align}
where $U_{\mathcal{A}}$ is given by \eqref{eq:Umatrix} and $\lambda_a \in \mathbb{R}^r_{\ge 0}$ is a vector of appropriate 
size with the non-negative corresponding KKT multipliers associated to the active CBF constraints.
Notice that \eqref{eq:practical_equilibrium_condition} is equivalent to $f_{\mathcal{A}}(x,\lambda_a) = 0$ as defined in \eqref{eq:fa}.
Thus, in this case, the equilibrium point is at the boundary intersection $\partial \mathcal{C}_{\mathcal{A}}$ and satisfies 
$f_{\mathcal{A}}(x_e, \lambda_a) = 0$ for some $\lambda_a \in \mathbb{R}^r_{\ge 0}$, demonstrating \eqref{eq:boundary_equilibria}.
\\
%
{\bf Case 4.} Consider the region where all CBF constraints are {\it inactive}:
$L_{f_{cl}} h_i + \alpha_i(h_i) > 0$, $i = 1, \cdots, N$.
Following the same steps of {\bf Case 4} of Theorem 2 on \cite{Reis2025}, we conclude that equilibrium points 
occurring in this region must lie in the interior of the safe set, that is, $x_e \in int(\mathcal{C})$.
%
%
%
Additionally, \eqref{eq:explicit_equilibrium_condition} must be satisfied with $\lambda = 0$, which means that 
$f(x_e) = p \gamma(V(x_e)) G(x_e) \nabla V(x_e)$. This demonstrates \eqref{eq:interior_equilibria}.
%
%
\end{proof}


The work \cite{TanDimos2024} has proposed a system transformation that removes certain types of interior equilibrium points from the closed-loop system with the CLF-CBF QP controller. We conjecture that a similar transformation could be performed in the case of the safety-filter. 
Thereby, in the remaining of the paper, we focus on the stability properties for boundary equilibrium points.
\subsection{Stability of Boundary Equilibrium Points}

\begin{lemma}[Closed-Loop Jacobian]
\label{lemma:jacobian_boundary}
Let \eqref{eq:closed_loop} be the closed-loop system formed by the nonlinear system \eqref{eq:affine_nonlinear} 
with controller \eqref{eq:generalized_QP}, and let $\mathcal{A}$ from \dref{def:equilibrium_manifold} represent 
the indexes of $1 \le r \le n$ active CBF constraints.
For a boundary equilibrium point $x_e \in \mathcal{E}_{\partial \mathcal{C}_{\mathcal{A}}}$ with full rank 
and $U_{\mathcal{A}}^\mathsf{T}(x_e) g(x_e) \ne 0$,
the Jacobian matrix $J_{cl}(x_e) \in \mathbb{R}^{n \times n}$ of the closed-loop system \eqref{eq:closed_loop} 
computed at $x_e$ is given by
\begin{align}
J_{f_{cl}}(x_e) 
\!&=\! \mathcal{P}_{U_{\mathcal{A}}} 
\!\left( \mathcal{P}_V J_{\mathcal{A}}(x_e, \bar{\lambda}_e) \!-\! c^{-1} \gamma^{\prime}(V) G \nabla V \nabla V^\mathsf{T} \right) \nonumber
\\
&\qquad \qquad \qquad -\mathcal{P}_V G U_{\mathcal{A}} S_A^{-1} \Lambda^{\prime} U_{\mathcal{A}}^\mathsf{T}
\label{eq:jacobian_boundary} 
\end{align}
%
%
where $\bar{\lambda}_e = \mat{ p \gamma(V(x_e)) & \lambda_e^\mathsf{T} }^\mathsf{T}$, and 
$\lambda_e \in \mathbb{R}^{r}_{\ge 0}$ is the vector with the $r$ corresponding KKT multipliers at $x_e$ for the $r$ 
active CBF constraints, $\mathcal{P}_{U_{\mathcal{A}}} \!=\! I_n \!-\! \mathcal{P}_V G U_{\mathcal{A}} \, S_A^{-1} U^\mathsf{T}_{\mathcal{A}}$ with
$S_A \!\!=\! U^\mathsf{T}_{\mathcal{A}} \mathcal{P}_V G U_{\mathcal{A}}$,
$\Lambda^{\prime} \!\!=\! \Diag{ \alpha^{\prime}_{a_1}\!(h_{a_1}),\!\cdots\!,\alpha^{\prime}_{a_r}\!(h_{a_r}) } \!>\! 0$ and
\begin{align}
J_{\mathcal{A}}(x, \bar{\lambda}_a) \!&=\! \frac{\partial f_{nom}}{\partial x} \!-\! \lambda_0 \frac{\partial (G \nabla V)}{\partial x} \!+\! 
\sum_{i \in \mathcal{A}}\frac{\partial (G \nabla h_{i})}{\partial x} \lambda_{i}
\label{eq:Ja}
\end{align}
with $\bar{\lambda}_a = \mat{ \lambda_0 & \lambda_a^\mathsf{T} }^\mathsf{T}$. 
\end{lemma}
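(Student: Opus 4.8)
The plan is to use that, by \tref{theorem:existence_equilibria}, a boundary equilibrium $x_e \in \mathcal{E}_{\partial \mathcal{C}_{\mathcal{A}}}$ belongs to the region $\mathcal{S}_{\mathcal{A}}$ on which \emph{exactly} the CLF constraint and the $r$ CBF constraints indexed by $\mathcal{A}$ are active. On (the closure of) this region the QP \eqref{eq:generalized_QP} reduces to an equality-constrained QP, so by \eqref{eq:control_solution} the closed-loop field is
\[ f_{cl}(x) = f_{nom}(x) - \lambda_0(x)\,G\nabla V + G U_{\mathcal{A}}\,\lambda_a(x), \]
where $(\lambda_0(x), \lambda_a(x))$ solve the active-constraint equalities; these are smooth functions of $x$ near $x_e$, since the reduced KKT matrix has Schur complement $S_A = U_{\mathcal{A}}^\mathsf{T} \mathcal{P}_V G U_{\mathcal{A}}$, which is invertible under the full-rank hypothesis on $U_{\mathcal{A}}$ and $U_{\mathcal{A}}^\mathsf{T} g \neq 0$ (implicit function theorem). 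We then compute the Jacobian of this smooth piece at $x_e$.

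Differentiating the displayed $f_{cl}$ at $x_e$ and using $f_{cl}(x_e) = 0$ together with $\bar{\lambda}(x_e) = \bar{\lambda}_e$ (in particular $\lambda_0(x_e) = p\gamma(V(x_e))$, as established in the proof of \tref{theorem:existence_equilibria}), the terms obtained by differentiating $f_{nom}$, $\lambda_0 G\nabla V$ and $G U_{\mathcal{A}} \lambda_a$ with the multipliers frozen assemble exactly into $J_{\mathcal{A}}(x_e, \bar{\lambda}_e)$ of \eqref{eq:Ja}, leaving
\[ J_{f_{cl}}(x_e) = J_{\mathcal{A}}(x_e, \bar{\lambda}_e) - G\nabla V\,\tfrac{\partial \lambda_0}{\partial x} + G U_{\mathcal{A}}\,\tfrac{\partial \lambda_a}{\partial x}. \]
It remains to evaluate the multiplier sensitivities $\partial \lambda_0/\partial x$ and $\partial \lambda_a/\partial x$ at $x_e$.

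These follow by differentiating the active-constraint equalities $\nabla V^\mathsf{T} f_{cl} + \gamma(V) - p^{-1}\lambda_0 = 0$ and $\nabla h_i^\mathsf{T} f_{cl} + \alpha_i(h_i) = 0$, $i \in \mathcal{A}$, at $x_e$. Because $f_{cl}(x_e) = 0$, every term in which $f_{cl}$ is not differentiated vanishes, yielding the clean identities $U_{\mathcal{A}}^\mathsf{T} J_{f_{cl}}(x_e) = -\Lambda^{\prime} U_{\mathcal{A}}^\mathsf{T}$ and $c\,\tfrac{\partial \lambda_0}{\partial x} = \nabla V^\mathsf{T} J_{\mathcal{A}} + \gamma^{\prime}(V)\nabla V^\mathsf{T} + \nabla V^\mathsf{T} G U_{\mathcal{A}}\,\tfrac{\partial \lambda_a}{\partial x}$ (using $c = p^{-1} + \norm{\nabla V}{G}^2$). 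Substituting the latter into the displayed expression for $J_{f_{cl}}(x_e)$ makes the projector $\mathcal{P}_V = I - c^{-1} G\nabla V\nabla V^\mathsf{T}$ appear and reduces it to
\[ J_{f_{cl}}(x_e) = \mathcal{P}_V J_{\mathcal{A}} - c^{-1}\gamma^{\prime}(V) G\nabla V\nabla V^\mathsf{T} + \mathcal{P}_V G U_{\mathcal{A}}\,\tfrac{\partial \lambda_a}{\partial x}; \]
inserting this into $U_{\mathcal{A}}^\mathsf{T} J_{f_{cl}}(x_e) = -\Lambda^{\prime} U_{\mathcal{A}}^\mathsf{T}$ and solving for $\partial \lambda_a/\partial x$ (here $S_A$ invertible is used once more) gives $\tfrac{\partial \lambda_a}{\partial x} = -S_A^{-1} U_{\mathcal{A}}^\mathsf{T}\bigl(\mathcal{P}_V J_{\mathcal{A}} - c^{-1}\gamma^{\prime}(V) G\nabla V\nabla V^\mathsf{T}\bigr) - S_A^{-1}\Lambda^{\prime} U_{\mathcal{A}}^\mathsf{T}$. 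Substituting back and collecting the common left factor $I_n - \mathcal{P}_V G U_{\mathcal{A}} S_A^{-1} U_{\mathcal{A}}^\mathsf{T} = \mathcal{P}_{U_{\mathcal{A}}}$ produces exactly \eqref{eq:jacobian_boundary}.

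I expect the main obstacle to be the first step: rigorously justifying that near $x_e$ the closed-loop field coincides with the smooth equality-constrained-QP field whose Jacobian we differentiate — that is, that the active set is locally stable (which relies on strict positivity of the active CBF multipliers in $\lambda_e$ and on $S_A$ being nonsingular under the stated rank/transversality hypotheses), and, if the CBF constraints become inactive on one side of $\partial \mathcal{C}_{\mathcal{A}}$, handling the resulting one-sided differentiability as in the single-CBF analysis of \cite{Reis2025}. The remaining algebra, although lengthy, is routine, relying repeatedly on $U_{\mathcal{A}}^\mathsf{T} \mathcal{P}_V = U_{\mathcal{A}}^\mathsf{T} - c^{-1} U_{\mathcal{A}}^\mathsf{T} G\nabla V\nabla V^\mathsf{T}$ and on the Schur-complement identity $S_A = U_{\mathcal{A}}^\mathsf{T} G U_{\mathcal{A}} - c^{-1} U_{\mathcal{A}}^\mathsf{T} G\nabla V\nabla V^\mathsf{T} G U_{\mathcal{A}}$.
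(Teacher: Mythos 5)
Your proof is correct and follows essentially the same route as the paper's: implicit differentiation of the active KKT conditions at $x_e$, using $f_{cl}(x_e)=0$ to discard every term containing an undifferentiated $f_{cl}$, and invertibility of the Schur complement $S_A$ to solve for the multiplier sensitivities. The only difference is organizational — you eliminate $\partial \lambda_0 / \partial x$ via the differentiated CLF constraint row and then solve for $\partial \lambda_a / \partial x$, whereas the paper differentiates the full reduced KKT system $A_a \bar{\lambda}_a = b_a$ and applies the explicit block-inverse formula for $A_a^{-1}$; both assemble the same factors $\mathcal{P}_{U_{\mathcal{A}}}$ and $\mathcal{P}_V$.
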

\begin{remark}
    The Jacobian $J_{\mathcal{A}}$ as defined in \eqref{eq:Ja} is not the same as $J_{f_{\mathcal{A}}}$, 
    the Jacobian of $f_\mathcal{A}$ in \eqref{eq:fa}: they differ precisely by a factor of $p \gamma^{\prime}(V) \nabla V \nabla V^\mathsf{T}$.
\end{remark}
\begin{proof}
This demonstration is a direct continuation of {\bf Case 3} from the proof of Theorem \ref{theorem:existence_equilibria}.
Using the complementary slackness conditions, $\dot{V} + \inner{L_g V}{u^\star}{} = \delta^\star$,
$\dot{h}_i + \inner{L_g h_i}{u^\star}{} = 0 \,\, \forall i \in \mathcal{A}$.
Substituting \eqref{eq:control_solution}-\eqref{eq:delta_solution} these expressions and using the fact that 
$\lambda_i = 0$ for all $i \notin \mathcal{A}$ yields the following system: 
\begin{align}
\underbrace{
\begin{bmatrix}
c \!&\! -\nabla V^\mathsf{T} G U_{\mathcal{A}} \\
- U_{\mathcal{A}}^\mathsf{T} G \nabla V \!&\! U_{\mathcal{A}}^\mathsf{T} G U_{\mathcal{A}}
\end{bmatrix}
}_{A_a(x)}
\underbrace{\begin{bmatrix}
\lambda_0 \\
\lambda_a
\end{bmatrix}}_{\bar{\lambda}_a}
\!=\!
\underbrace{\begin{bmatrix}
F_V \\
- F_{h_a}
\end{bmatrix}}_{b_a(x)}
\label{eq:lambda_solutions}
\end{align}
where $\lambda_a \in \mathbb{R}^r_{\ge 0}$, $F_{h_a} \!= U_{\mathcal{A}}^\mathsf{T} \, f_{nom} + \bar{\alpha}_a \in \mathbb{R}^r$ 
and $\bar{\alpha}_a \!=\! \mat{\alpha_{a_1}(h_{a_1}) \!\!&\!\! \cdots \!\!&\!\! \alpha_{a_r}(h_{a_r})}^\mathsf{T}$.
Here, $A_a(x) \in \mathbb{R}^{(r+1) \times (r+1)}$ and $b_a(x) \in \mathbb{R}^{r+1}$ are essentially reduced versions 
of matrices $A(x)$ and $b(x)$ from \eqref{eq:dualQP}, with fewer rows and columns (since not all constraints are active).
In particular, the set $\mathcal{S}_{\mathcal{A}}$ where the CLF and only the CBF constraints from $\mathcal{A}$ are active 
is given by
\begin{align}
\mathcal{S}_{\mathcal{A}} 
= \{ x \in \mathbb{R}^n \,|\, & f_{cl}(x) \!=\! f_{nom} \!-\! \lambda_0 G \nabla V \!+\! G U_{\mathcal{A}} \lambda_a \}
\label{eq:Si_set}
\end{align}
where $\lambda_0, \lambda_a$ are the positive solutions of \eqref{eq:lambda_solutions}.
%
%
Using the known formula for inversion of block matrices, since $c > 0$, $A_a(x)$ is invertible if 
its Schur complement $S_{A} = U^\mathsf{T}_{\mathcal{A}} \mathcal{P}_V G U_{\mathcal{A}} \in \mathbb{R}^{r \times r}$ 
is invertible. Since $\mathcal{P}_V(x) > 0 \, \forall x$, $S_{A}^{-1}$ exists if 
$U_{\mathcal{A}}$ has full column rank and if $U^\mathsf{T}_{\mathcal{A}}(x) g(x) \ne 0$.
Under these assumptions, a formula for $A_a^{-1}(x)$ is
\begin{align}
A_a^{-1}(x) \!=\!\! 
\begin{bmatrix}
c^{-1} \!\!&\!\! 0 \\
0 \!\!&\!\! 0
\end{bmatrix} \!+\!
\frac{1}{c^2}\!\!
\begin{bmatrix}
\nabla V^\mathsf{T} G U_{\mathcal{A}} \\
c \, I_r
\end{bmatrix}
\!S_A^{-1} \!\!
\begin{bmatrix}
\nabla V^\mathsf{T} G U_{\mathcal{A}} \\
c \, I_r
\end{bmatrix}^\mathsf{T} \!\!\!
\label{eq:Ainv}
\end{align}
where the dimensions of vectors and matrices are conformable for matrix addition and multiplication.
Then, the KKT multipliers $\lambda_0$, $\lambda_a$ can be found by solving \eqref{eq:lambda_solutions}.

Taking the derivative of \eqref{eq:lambda_solutions} with respect to the $k$-th state component $x_k$ and 
solving for $\partial_k \bar{\lambda}_a$ yields
%
%
\begin{align}
\partial_k \bar{\lambda}_a(x) &= A_a^{-1} \left( \partial_k b_a - \partial_k A_a \bar{\lambda}_a \right)
\label{eq:partial_lambda} 
\end{align}
%
%
Defining $\bar{U} = \mat{ -\nabla V \!\!&\!\! U_{\mathcal{A}}} \in \mathbb{R}^{n \times (r+1)}$,
the partial derivatives of $A_a(x)$ and $b_a(x)$ are:
\begin{align}
\partial_k A_a(x) \!&=\! \partial_k ( \bar{U}^\mathsf{T} G \bar{U} ) \label{eq:del_A} \\
\partial_k b_a(x) \!&=\! - \partial_k( \bar{U}^\mathsf{T} f_{nom} )
\!-\! \bar{\Lambda}^{\prime}
[\bar{U}^\mathsf{T}]_k
\label{eq:del_b}
\end{align}
where $\bar{\Lambda}^{\prime} = \Diag{ \gamma^{\prime}(V), \alpha^{\prime}_{a_1}(h_{a_1}) ,\, \cdots ,\, \alpha^{\prime}_{a_r}(h_{a_r}) }$.
From \eqref{eq:Ja}, notice that $[J_\mathcal{A}]_k = \partial_k f_{nom} + \partial_k (G \bar{U}) \bar{\lambda}_a$.
Using this fact, combining equations \eqref{eq:del_A}-\eqref{eq:del_b} to compute the term $\partial_k b_a - \partial_k A_a \bar{\lambda}_a$ 
in \eqref{eq:partial_lambda}, left multiplying it by $A_a^{-1}$ and using the fact that $f_{cl}(x) = f_{nom} + G \bar{U} \bar{\lambda}_a$
yields 
%
\begin{align}
\partial_k \bar{\lambda}_a(x) \!&=\! 
- A_a^{-1} \!\left( \bar{U}^\mathsf{T} [J_{\mathcal{A}}]_k \!+\! 
\bar{\Lambda}^{\prime} [\bar{U}^\mathsf{T}]_k \!+\! (\partial_k \bar{U})^\mathsf{T} f_{cl} \right) 
\label{eq:partial_lambda2} 
\end{align}
where the fact that $[J_\mathcal{A}]_k = \partial_k( f_{nom} + G\bar{U} \bar{\lambda}_a )$ was used.
Then, taking the derivative of the closed-loop system dynamics \eqref{eq:closed_loop} with $\lambda_i = 0 \,\, \forall i \notin \mathcal{A}$ yields
\begin{align}
\partial_k f_{cl}(x) = [J_{\mathcal{A}}]_k + G \bar{U} \partial_k \bar{\lambda}_a
\label{eq:jacobian_column}
\end{align}
which by using the expression for $\partial_k \bar{\lambda}_a$ in \eqref{eq:partial_lambda2} yields an expression for the $k$-th column 
of the closed-loop Jacobian matrix $J_{f_{cl}}(x)$ at $\mathcal{S}_{\mathcal{A}}$.
Consider an equilibrium point $x_e \in \mathcal{E}_{\partial \mathcal{C}_\mathcal{A}}$. 
Since $f_{cl}(x_e) = 0$ by definition, the last term on the right-hand side of \eqref{eq:partial_lambda} vanishes.
Then, substituting $\partial_k \bar{\lambda}_a(x_e)$ in \eqref{eq:jacobian_column} and using the fact that 
$\lambda_0(x_e) = p \gamma(V(x_e))$ yields
\begin{align}
\partial_k f_{cl}(x_e) &= (I \!-\! G \bar{U} A_a^{-1} \bar{U}^\mathsf{T} ) [J_{\mathcal{A}}(x_e, \bar{\lambda}_e)]_k 
\label{eq:jacobian_column2} \\ 
&\qquad \qquad \qquad
- G \bar{U} A_a^{-1} \bar{\Lambda}^{\prime} [\bar{U}^\mathsf{T}]_k
\nonumber
\end{align}
%
%
On the assumptions of the theorem, namely a full rank $U_{\mathcal{A}}(x_e)$ and $U^\mathsf{T}_{\mathcal{A}}(x_e) g(x_e) \ne 0$,
equation \eqref{eq:Ainv} can be used to simplify the following expressions at \eqref{eq:jacobian_column2}:
\begin{align}
(I - G \bar{U} A_a^{-1} \bar{U}^\mathsf{T} ) &= \mathcal{P}_{U_\mathcal{A}} \mathcal{P}_V 
\label{eq:first_term} \\
G \bar{U} A_a^{-1} 
\bar{\Lambda}^{\prime}
[\bar{U}^\mathsf{T}]_k &= c^{-1} \gamma^{\prime}(V) \mathcal{P}_{U_\mathcal{A}}  G \nabla V [\nabla V]_k \nonumber \\
&+ \mathcal{P}_V G U_\mathcal{A} S_A^{-1} \Lambda^{\prime} [U_\mathcal{A}^\mathsf{T}]_k
\label{eq:sec_term}
\end{align}
Substituting \eqref{eq:first_term}-\eqref{eq:sec_term} in \eqref{eq:jacobian_column2} yields
\begin{align}
\partial_k f_{cl}(x_e) \!&=\! \mathcal{P}_{U_{\mathcal{A}}} \!\left( \mathcal{P}_V J_{\mathcal{A}}(x_e, \bar{\lambda}_e) \!-\! c^{-1} \gamma^{\prime}(V) G \nabla V [\nabla V]_k \right) 
\nonumber \\
&\qquad \qquad \qquad -\mathcal{P}_V G U_{\mathcal{A}} S_A^{-1} \Lambda^{\prime} [U_{\mathcal{A}}]_k
\label{eq:jacobian_column3} 
\end{align}
which is simply the $k$-th column of the closed-loop Jacobian matrix $J_{f_{cl}}(x_e)$ computed at $x_e$.
\end{proof}
\begin{remark}
Notice that $U_{\mathcal{A}}^\mathsf{T} \mathcal{P}_{U_{\mathcal{A}}} = U_{\mathcal{A}}^\mathsf{T} - S_A S_A^{-1} U_{\mathcal{A}}^\mathsf{T} = 0$, therefore 
showing that $\mathcal{P}_{U_{\mathcal{A}}}$ is a generalized projection matrix for the orthogonal complement of the column space of $U_{\mathcal{A}}$.
From \eqref{eq:jacobian_boundary}, that means that $U_{\mathcal{A}}^\mathsf{T}(x_e) J_{f_{cl}}(x_e) = - \Lambda^{\prime} U_{\mathcal{A}}^\mathsf{T}(x_e)$.
Therefore, the gradients $\nabla h_{a_1}(x_e), \cdots, \nabla h_{a_r}(x_e)$ are left eigenvectors of $J_{f_{cl}}(x_e)$ with corresponding negative eigenvalues 
$-\alpha_{a_1}^{\prime}(h_{a_1}), \cdots, -\alpha_{a_r}^{\prime}(h_{a_r}) < 0$.
In particular, this implies that if an equilibrium point $x_e$ on the conditions of \tref{theorem:existence_equilibria}
occurs at the intersection of exactly $r = n$ barrier boundaries, $x_e$ must be stable.
\end{remark}
\begin{lemma}
\label{lemma:special_basis}
Let $X \in \mathbb{R}^{n \times n}$ be a symmetric positive definite matrix defining an inner product $\inner{\cdot}{\cdot}{X}$ over $\mathbb{R}^n$, 
and $\mathcal{Z} = \{ z_1, \cdots , z_r \}$ be a set of $r$ linearly independent vectors.
Additionally, let $\mathcal{V} = \{ v_1, \cdots, v_{n-r} \}$ be a basis for the orthogonal complement of 
$\mathcal{Z}$ with respect to $\inner{\cdot}{\cdot}{X}$. Then, $\mathcal{Z} \cup \mathcal{V}$ is a basis 
for $\mathbb{R}^n$.
\end{lemma}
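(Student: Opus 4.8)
The statement is a standard linear-algebra fact: if $\mathcal{Z}=\{z_1,\dots,z_r\}$ is linearly independent and $\mathcal{V}=\{v_1,\dots,v_{n-r}\}$ is a basis for the $X$-orthogonal complement $\mathcal{Z}^{\perp_X}$, then the union is a basis of $\mathbb{R}^n$. The plan is to show that $\mathcal{Z}\cup\mathcal{V}$ is a linearly independent set of $n$ vectors in an $n$-dimensional space, which forces it to be a basis.

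First I would record the dimension count. Since $X=X^\mathsf{T}>0$, the bilinear form $\inner{\cdot}{\cdot}{X}$ is a genuine inner product on $\mathbb{R}^n$, so $\dim \mathcal{Z}^{\perp_X}=n-\dim(\Span z_1,\dots,z_r)=n-r$, using linear independence of $\mathcal{Z}$ for the last equality. Hence $|\mathcal{Z}\cup\mathcal{V}|=r+(n-r)=n$ (the two lists are disjoint as sets, but even if they weren't, the argument below handles it by showing the $n$ listed vectors are independent).

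Next I would prove linear independence. Suppose $\sum_{i=1}^{r}\beta_i z_i+\sum_{j=1}^{n-r}\mu_j v_j=0$. Write $z:=\sum_i \beta_i z_i\in\Span z_1,\dots,z_r$ and $v:=\sum_j \mu_j v_j\in\mathcal{Z}^{\perp_X}$, so $z=-v$. Then $z$ lies in both $\Span z_1,\dots,z_r$ and its $X$-orthogonal complement, so $\inner{z}{z}{X}=0$; since $X>0$ this gives $z=0$, hence also $v=0$. From $z=0$ and linear independence of $\mathcal{Z}$ we get all $\beta_i=0$; from $v=0$ and the fact that $\mathcal{V}$ is a basis (in particular linearly independent) we get all $\mu_j=0$. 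Thus the only dependence is trivial, so $\mathcal{Z}\cup\mathcal{V}$ is independent.

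Finally, $n$ linearly independent vectors in $\mathbb{R}^n$ span $\mathbb{R}^n$, so $\mathcal{Z}\cup\mathcal{V}$ is a basis. There is no real obstacle here; the only point requiring a moment's care is the use of positive definiteness of $X$ to conclude $\Span z_1,\dots,z_r \cap \mathcal{Z}^{\perp_X}=\{0\}$ (this fails for a merely positive semidefinite $X$, which is why the hypothesis $X>0$, rather than $X\ge 0$, is needed), together with making the dimension bookkeeping explicit so that "independent set of size $n$" really does imply "basis."
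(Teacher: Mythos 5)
Your proof is correct and follows essentially the same route as the paper's: show that the $n$ vectors in $\mathcal{Z}\cup\mathcal{V}$ are linearly independent by exploiting $X$-orthogonality, then conclude by a dimension count. The only (cosmetic) difference is that you collapse the dependence relation into $z=-v$ and use $\inner{z}{z}{X}=0\Rightarrow z=0$, whereas the paper pairs the relation with each $z_j$ and invokes invertibility of the Gram matrix; your phrasing makes the role of strict positive definiteness of $X$ slightly more explicit, but the underlying argument is identical.
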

\begin{proof}
Recall that since the span of $\mathcal{V}$ is the orthogonal complement of $\mathcal{Z}$ with respect to $\inner{\cdot}{\cdot}{X}$,
$\inner{v_i}{z_j}{X} = 0 \,\, \forall i,j$. 
Let $\sum^{r}_{i=1} a_{i} z_i \!+\! \sum^{n-r}_{i=1} b_{i} v_i \!=\! 0$ be the equation for deciding linear independence.
Taking the inner product $\inner{\cdot}{\cdot}{X}$ with $z_j \in \mathcal{Z}$ yields
$\sum^{r}_{i=1} a_{i} \inner{z_j}{z_i}{X} \!=\! 0$. Since the vectors from $\mathcal{Z}$ are linearly independent, 
the only solution is $a_i = 0,\, i = 1, \cdots, r$.
Similarly, taking the inner product $\inner{\cdot}{\cdot}{X}$ with $v_j \in \mathcal{V}$ yields
$\sum^{n-r}_{i=1} b_{i} \inner{v_j}{v_i}{X} = 0$. Since the vectors from $\mathcal{V}$ are also linearly independent 
($\mathcal{V}$ is a basis), the only solution is $b_i = 0,\, i = 1, \cdots, n-r$.
%
%
Therefore, the set $\mathcal{Z} \cup \mathcal{V}$ must be composed of $n$ 
linearly independent vectors, constituting a basis for $\mathbb{R}^n$.
\end{proof}
\begin{proposition}
\label{prop:jac_simplified}
Let $\mathcal{W} = \{ w_1,\cdots, w_{n-1} \}$ be a basis for $\{G \nabla V\}^\perp$.
Then, with $B_V = \mat{ \nabla V \!\!&\!\! w_1 \!\!&\!\! \cdots \!\!&\!\! w_{n-1} } \in \mathbb{R}^{n \times n}$, the following formula holds:
\begin{align}
\mathcal{P}_V J_{\mathcal{A}} \!-\! c^{-1} \gamma^{\prime}(V) G \nabla V \nabla V^\mathsf{T} &= 
(B_V^\mathsf{T})^{-1} D B_V^\mathsf{T} J_{f_\mathcal{A}}
\label{eq:jacobian_formula} 
\end{align}
where $D = \Diag{ p^{-1} c^{-1}, I_{n-1}} > 0$. 
\end{proposition}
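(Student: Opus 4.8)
The plan is to recognize that $\mathcal{P}_V$ is \emph{diagonalized} by the columns of $B_V$, reduce the right-hand side of \eqref{eq:jacobian_formula} to $\mathcal{P}_V J_{f_{\mathcal{A}}}$, and then close the gap between $\mathcal{P}_V J_{f_{\mathcal{A}}}$ and $\mathcal{P}_V J_{\mathcal{A}} - c^{-1}\gamma^{\prime}(V) G\nabla V\nabla V^\mathsf{T}$ by invoking the known relation between $J_{\mathcal{A}}$ and $J_{f_{\mathcal{A}}}$.

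First I would record two eigen-relations for $\mathcal{P}_V = I - c^{-1} G\nabla V\nabla V^\mathsf{T}$, both immediate from $c = p^{-1} + \nabla V^\mathsf{T} G\nabla V$. Left-multiplying by $\nabla V^\mathsf{T}$ gives $\nabla V^\mathsf{T}\mathcal{P}_V = (1 - c^{-1}\nabla V^\mathsf{T} G\nabla V)\,\nabla V^\mathsf{T} = p^{-1}c^{-1}\,\nabla V^\mathsf{T}$, so $\nabla V$ is a left eigenvector of $\mathcal{P}_V$ with eigenvalue $p^{-1}c^{-1}$; and for each $w_i \in \{G\nabla V\}^\perp$ we have $w_i^\mathsf{T} G\nabla V = 0$, whence $w_i^\mathsf{T}\mathcal{P}_V = w_i^\mathsf{T}$. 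Stacking these rows yields $B_V^\mathsf{T}\mathcal{P}_V = D\,B_V^\mathsf{T}$ with $D = \Diag{p^{-1}c^{-1}, I_{n-1}}$. Because $\mathcal{W}$ is a basis of $\{G\nabla V\}^\perp$ — which forces $G\nabla V \neq 0$, hence $\nabla V^\mathsf{T} G\nabla V > 0$ and $\nabla V \notin \{G\nabla V\}^\perp$ — the matrix $B_V$ is invertible; this is the content of \lref{lemma:special_basis} applied with $X = G$ and $\mathcal{Z} = \{\nabla V\}$, whose $G$-orthogonal complement is exactly $\{G\nabla V\}^\perp$. Consequently $\mathcal{P}_V = (B_V^\mathsf{T})^{-1} D\,B_V^\mathsf{T}$, so the right-hand side of \eqref{eq:jacobian_formula} equals $\mathcal{P}_V J_{f_{\mathcal{A}}}$.

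Next I would invoke $J_{f_{\mathcal{A}}} = J_{\mathcal{A}} - p\gamma^{\prime}(V)\, G\nabla V\nabla V^\mathsf{T}$ — the extra term is produced by the product rule applied to the summand $p\gamma(V)G\nabla V$ of $f_{\mathcal{A}}$ in \eqref{eq:fa}, so it carries a factor $G$ in front of $\nabla V\nabla V^\mathsf{T}$ (cf.\ the Remark following \lref{lemma:jacobian_boundary}) — together with the companion identity $\mathcal{P}_V G\nabla V = (1 - c^{-1}\nabla V^\mathsf{T} G\nabla V)\,G\nabla V = p^{-1}c^{-1}\,G\nabla V$. Combining,
\begin{align*}
\mathcal{P}_V J_{f_{\mathcal{A}}} &= \mathcal{P}_V J_{\mathcal{A}} - p\gamma^{\prime}(V)\,(\mathcal{P}_V G\nabla V)\,\nabla V^\mathsf{T} \\
&= \mathcal{P}_V J_{\mathcal{A}} - p\gamma^{\prime}(V)\,p^{-1}c^{-1}\, G\nabla V\nabla V^\mathsf{T} \\
&= \mathcal{P}_V J_{\mathcal{A}} - c^{-1}\gamma^{\prime}(V)\, G\nabla V\nabla V^\mathsf{T},
\end{align*}
which is exactly the left-hand side of \eqref{eq:jacobian_formula}.

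The individual computations are short and essentially mechanical, so I do not anticipate a genuine obstacle; the points requiring care are the bookkeeping ones — making sure the $J_{\mathcal{A}}$-versus-$J_{f_{\mathcal{A}}}$ difference term carries the factor $G$, and verifying that $B_V$ is a genuine basis of $\mathbb{R}^n$ (which rests on the nonsingularity of $G$, or at least $G\nabla V \neq 0$, at the equilibrium) so that $(B_V^\mathsf{T})^{-1}$ makes sense.
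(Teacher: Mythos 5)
Your proof is correct and follows essentially the same route as the paper's: both rest on the row identities $\nabla V^\mathsf{T}\mathcal{P}_V = p^{-1}c^{-1}\nabla V^\mathsf{T}$ and $w_i^\mathsf{T}\mathcal{P}_V = w_i^\mathsf{T}$ (equivalently $B_V^\mathsf{T}\mathcal{P}_V = D B_V^\mathsf{T}$) combined with the relation $J_{f_\mathcal{A}} = J_\mathcal{A} - p\gamma^{\prime}(V)\,G\nabla V\nabla V^\mathsf{T}$; the paper merely verifies the claimed equation row by row after left-multiplying by $B_V^\mathsf{T}$, whereas you first write $\mathcal{P}_V=(B_V^\mathsf{T})^{-1}DB_V^\mathsf{T}$ and then convert $J_{f_\mathcal{A}}$ into $J_\mathcal{A}$. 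As a minor bonus, your version of the $J_\mathcal{A}$--$J_{f_\mathcal{A}}$ difference correctly carries the factor $G$ in front of $\nabla V\nabla V^\mathsf{T}$, which the paper's Remark following Lemma~\ref{lemma:jacobian_boundary} omits.
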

\begin{proof}
By \lref{lemma:special_basis}, the set $\mathcal{B}_V = \{ \nabla V, w_1, \cdots, w_{n-1} \}$ is a basis for $\mathbb{R}^n$. 
Therefore, the square matrix $B_V = \mat{ \nabla V \!\!&\!\! w_1 \!\!&\!\! \cdots \!\!&\!\! w_{n-1} }$ is full rank.
Left-multiplying matrix 
$\mathcal{P}_V J_{\mathcal{A}} \!-\! c^{-1} \gamma^{\prime}(V) G \nabla V \nabla V^\mathsf{T}$ by 
$\nabla V^\mathsf{T}$ and $w_i^\mathsf{T}$ and carrying out the algebraic simplifications due to $1 - c^{-1} \norm{\nabla V}{G}^2 = (pc)^{-1}$ and $w_i^\mathsf{T} G \nabla V = 0$ 
yields $p^{-1} c^{-1}\nabla V^\mathsf{T} J_{f_\mathcal{A}}$ and $w_i^\mathsf{T} J_{f_\mathcal{A}}$, respectively, $i = 1, \cdots, n-1$.
%
%
Combining these $n$ equations 
in matrix form yields \eqref{eq:jacobian_formula} with both sides left-multiplied by $B_V^\mathsf{T}$.
\end{proof}

Next, we demonstrate the main result of this work, a theorem for the stability of boundary equilibrium points at 
$\mathcal{E}_{\partial \mathcal{C}_\mathcal{A}}$, occurring at the intersection of exactly $r < n$ barrier boundaries.

\begin{theorem}[Stability of Equilibrium Points]
\label{thm:curvature}
Consider the same assumptions of \lref{lemma:jacobian_boundary}, let 
$x_e \in \mathcal{E}_{\partial \mathcal{C}_{\mathcal{A}}}$ be a boundary equilibrium point of the closed-loop system, and 
$r = |\mathcal{A}| < n$.
%
%
If there exists $v \in \{ \nabla h_{a_1}(x_e), \cdots, \nabla h_{a_r}(x_e)\}^\perp$ 
(with the standard inner product $\inner{\cdot}{\cdot}{}$) such that
\begin{align}
v^\mathsf{T} J_{f_{\mathcal{A}}}(x_e, \lambda_e) v > 0
\label{eq:curvature_condition}
\end{align}
then $x_e$ is unstable. Otherwise, it is stable. In \eqref{eq:curvature_condition}, 
$J_{f_{\mathcal{A}}}$ is the Jacobian of the vector field \eqref{eq:fa} with respect to $x$.
\end{theorem}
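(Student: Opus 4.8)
The plan is to study the spectrum of the closed-loop Jacobian $J_{f_{cl}}(x_e)$ and show that condition \eqref{eq:curvature_condition} is equivalent to $J_{f_{cl}}(x_e)$ having an eigenvalue with positive real part. The starting point is the remark following \lref{lemma:jacobian_boundary}: the gradients $\nabla h_{a_1}(x_e), \dots, \nabla h_{a_r}(x_e)$ are left eigenvectors of $J_{f_{cl}}(x_e)$ with the strictly negative eigenvalues $-\alpha'_{a_i}(h_{a_i})$. Consequently the $r$-dimensional column space of $U_{\mathcal A}(x_e)$ is an invariant subspace for $J_{f_{cl}}(x_e)^\mathsf{T}$, so its orthogonal complement $\mathcal{T}=\{\nabla h_{a_1}(x_e),\dots,\nabla h_{a_r}(x_e)\}^\perp$ (dimension $n-r$) is invariant under $J_{f_{cl}}(x_e)$. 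The stability of $x_e$ is therefore governed entirely by the restriction $J_{f_{cl}}(x_e)|_{\mathcal{T}}$, since the remaining $r$ eigenvalues are already known to be negative.

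Next I would evaluate $J_{f_{cl}}(x_e)$ on a vector $v\in\mathcal{T}$. From \eqref{eq:jacobian_boundary}, the term $-\mathcal{P}_V G U_{\mathcal A}S_A^{-1}\Lambda' U_{\mathcal A}^\mathsf{T}$ kills $v$ because $U_{\mathcal A}^\mathsf{T}v=0$ (standard inner product), leaving $J_{f_{cl}}(x_e)v = \mathcal{P}_{U_{\mathcal A}}\big(\mathcal{P}_V J_{\mathcal A}(x_e,\bar\lambda_e) - c^{-1}\gamma'(V)G\nabla V\nabla V^\mathsf{T}\big)v$. Applying \pref{prop:jac_simplified}, the bracket equals $(B_V^\mathsf{T})^{-1}D\,B_V^\mathsf{T} J_{f_{\mathcal A}}$, so $J_{f_{cl}}(x_e)v = \mathcal{P}_{U_{\mathcal A}}(B_V^\mathsf{T})^{-1}D\,B_V^\mathsf{T} J_{f_{\mathcal A}}(x_e,\lambda_e)v$. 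The goal is then to relate the sign of the real parts of the eigenvalues of this operator on $\mathcal{T}$ to the quadratic form $v^\mathsf{T} J_{f_{\mathcal A}}v$. The natural device is to endow $\mathcal{T}$ with a suitable inner product — I expect the one induced by $D$ in the $B_V$ coordinates, i.e. an inner product $\langle\cdot,\cdot\rangle_{M}$ for an appropriate symmetric positive definite $M$ built from $B_V$ and $D$ — under which $\mathcal{P}_{U_{\mathcal A}}(B_V^\mathsf{T})^{-1}D B_V^\mathsf{T}$ acts self-adjointly (or at least has the same inertia as a symmetric matrix congruent to $J_{f_{\mathcal A}}$ restricted to $\mathcal{T}$). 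If I can show that $J_{f_{cl}}(x_e)|_{\mathcal{T}}$ is, up to a positive-definite congruence and conjugation by an invertible matrix, the symmetric part of $J_{f_{\mathcal A}}(x_e,\lambda_e)$ compressed to $\mathcal{T}$, then by Sylvester's law of inertia it has a positive eigenvalue iff $v^\mathsf{T} J_{f_{\mathcal A}}v>0$ for some $v\in\mathcal{T}$, and all eigenvalues have negative real part otherwise. Stability/instability then follows from the standard linearization (Lyapunov/Hartman–Grobman) argument, using that the remaining $r$ eigenvalues are $-\alpha'_{a_i}<0$.

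The main obstacle is the passage from "the quadratic form $v^\mathsf{T} J_{f_{\mathcal A}}v$ is sign-indefinite on $\mathcal{T}$" to "the \emph{nonsymmetric} operator $J_{f_{cl}}(x_e)|_{\mathcal{T}}$ has an eigenvalue with positive real part," and conversely. For a general nonsymmetric matrix, positivity of some quadratic form $v^\mathsf{T}Av$ does not by itself force an eigenvalue with positive real part — one needs the extra structure that $\mathcal{P}_{U_{\mathcal A}}(B_V^\mathsf{T})^{-1}D B_V^\mathsf{T}$ restricted to $\mathcal{T}$ is self-adjoint with respect to a fixed positive-definite inner product, which makes it orthogonally diagonalizable with real eigenvalues and reduces the question to an honest inertia count. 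Establishing this self-adjointness cleanly — identifying the correct inner product on $\mathcal{T}$ and verifying $\mathcal{P}_{U_{\mathcal A}}$ is the orthogonal projection onto $\mathcal{T}$ in that metric (the remark already notes it is a generalized projection annihilating $U_{\mathcal A}^\mathsf{T}$) — is the crux. Once that is in place, the "stable otherwise" direction is immediate: if the form is negative semidefinite on $\mathcal{T}$, all $n-r$ transverse eigenvalues are $\le 0$; a short argument (strict negative definiteness, or a perturbation/genericity remark, or appeal to the fact that a zero transverse eigenvalue would contradict the equilibrium being isolated on $\mathcal{S}_{\mathcal A}$) upgrades this to asymptotic stability, completing the proof.
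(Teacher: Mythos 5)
There is a genuine gap, and it sits exactly where you flag it. Your reduction is sound up to a point and parallels the paper: the gradients are left eigenvectors of $J_{f_{cl}}(x_e)$ with negative eigenvalues, so the transverse subspace $\mathcal{T}=\{\nabla h_{a_1}(x_e),\dots,\nabla h_{a_r}(x_e)\}^\perp$ carries the remaining dynamics, and on $\mathcal{T}$ the Jacobian acts as $\mathcal{P}_{U_{\mathcal A}}(B_V^\mathsf{T})^{-1}D\,B_V^\mathsf{T}J_{f_{\mathcal A}}v$ by \pref{prop:jac_simplified}. But your route to the conclusion — show this restricted operator is self-adjoint in some inner product and then count inertia — is not carried out, and there is no reason it can work in general: $J_{f_{\mathcal A}}$ is the Jacobian of an arbitrary vector field built from $f$, $g$, $u_{nom}$ and the CBF Hessians, and nothing forces it (or its compression to $\mathcal{T}$) to be symmetrizable. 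A quadratic form $v^\mathsf{T}Av$ being positive somewhere says nothing about the spectrum of a nonsymmetric $A$ (take $A=\bigl[\begin{smallmatrix}-1 & 10\\ 0 & -1\end{smallmatrix}\bigr]$, which is Hurwitz yet has $v^\mathsf{T}Av>0$ for some $v$), so the equivalence you want between \eqref{eq:curvature_condition} and an unstable eigenvalue is precisely the statement you cannot get without the self-adjointness you have not established. The "if I can show\dots" in your write-up is doing all the work.

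The paper avoids the spectral question entirely. It builds a Lyapunov/Chetaev candidate $V(v)=v^\mathsf{T}Xv$ with $X=U_{\mathcal A}\Lambda_a U_{\mathcal A}^\mathsf{T}+W\Lambda_w W^\mathsf{T}>0$, where the columns of $W$ span the orthogonal complement of the gradients \emph{in the $\mathcal{P}_V G$ inner product} (so that $U_{\mathcal A}^\mathsf{T}\mathcal{P}_V G\,W=0$). That choice makes $XJ_{f_{cl}}(x_e)v = W\Lambda_w W^\mathsf{T}(B_V^\mathsf{T})^{-1}D\,B_V^\mathsf{T}J_{f_{\mathcal A}}v$, and tuning $W\Lambda_w W^\mathsf{T}$ so that the given $v$ is an eigenvector with eigenvalue $\sigma>0$ collapses $\dot V$ to $2\sigma\,v^\mathsf{T}J_{f_{\mathcal A}}v$; Chetaev's theorem then gives instability when \eqref{eq:curvature_condition} holds, and $\dot V\le 0$ gives (mere, not asymptotic) stability otherwise — note the theorem claims only "stable," so your closing step about upgrading to asymptotic stability is also aiming at the wrong target. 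To repair your proposal you would have to replace the inertia argument with a quadratic-form/Chetaev argument of this kind, i.e.\ exhibit an explicit $X>0$ for which $\dot V$ reduces to a positive multiple of $v^\mathsf{T}J_{f_{\mathcal A}}v$; the self-adjointness route should be abandoned.
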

\begin{proof}
Consider a boundary equilibrium point $x_e \in \mathcal{E}_{\partial \mathcal{C}_\mathcal{A}}$ with $U_\mathcal{A}(x_e) \ne 0$.
The first order Taylor series approximation of the closed-loop system on a neighborhood of $x_e$ is 
$\dot{x} \approx J_{cl}(x_e) \Delta x$ with $\Delta x = (x - x_e)$ being a disturbance vector around the equilibrium point. 
Since the CBF gradients are linearly independent at $x_e$ by assumption, $\Delta x$ can be written 
using a basis $\{ \nabla h_{a_1}(x_e), \cdots, \nabla h_{a_r}(x_e), v_1, \cdots, v_{n-1}(x_e) \}$, 
where $v_1, \cdots, v_{n-r}$ are fixed basis vectors for $\{ \nabla h_{a_1}(x_e), \cdots, \nabla h_{a_r}(x_e) \}^\perp$. 
Therefore, $v_j^\mathsf{T} U_\mathcal{A}(x_e) = 0 \,\, \forall j$ by construction.
%
%
One can write $x = x_e + U_\mathcal{A}(x_e) \, a + v$, where $v$ is a linear combination of the 
$v_i$, $i = 1, \cdots, n-r$ and $a \in \mathbb{R}^r$ is a vector of coordinates.
The time derivative of $x$ in this new basis is $\dot{x} = U_\mathcal{A}(x_e) \, \dot{a} + \dot{v}$. 
Left-multiplying this equation and $J_{f_{cl}}(x_e) \Delta x$ by $U_\mathcal{A}^\mathsf{T}(x_e)$, and 
using the expression \eqref{eq:jacobian_boundary} for the closed-loop Jacobian at $x_e$ and the fact 
that $U_\mathcal{A}^\mathsf{T} v = 0$ yields
\begin{align}
U_\mathcal{A}^\mathsf{T} \dot{x} &= U_\mathcal{A}^\mathsf{T} (U_\mathcal{A} \, \dot{a} + \dot{v} ) = 
U_\mathcal{A}^\mathsf{T} U_\mathcal{A} \, \dot{a} 
\label{eq:first_expr} \\
U_\mathcal{A}^\mathsf{T} J_{f_{cl}}(x_e) \Delta x 
&= - \Lambda^{\prime} U_\mathcal{A}^\mathsf{T} (U_\mathcal{A} a + v) = - \Lambda^{\prime} U_\mathcal{A}^\mathsf{T} U_\mathcal{A} a
\label{eq:second_expr}
\end{align}
Comparing \eqref{eq:first_expr}-\eqref{eq:second_expr} yields 
$\dot{a} = - (U_\mathcal{A}^\mathsf{T} U_\mathcal{A})^{-1} \Lambda^{\prime} U_\mathcal{A}^\mathsf{T} U_\mathcal{A} \, a \in \mathbb{R}^r$.
Since $\Lambda^{\prime} > 0$, this subsystem is asymptotically stable, which means that the column space of $U_\mathcal{A}$ 
is contained in the stable subspace associated to $x_e$.
Using these equations to solve for the dynamics of $v$ and letting $a \rightarrow 0$, one can conclude 
that the stability of $x_e$ is fully determined by the subsystem $\dot{v} = J_{f_{cl}}(x_e) v$.

The corresponding Lyapunov equation for $J_{cl}(x_e)$ is
%
%
\begin{align}
Y &= J_{cl}(x_e)^\mathsf{T} X + X J_{cl}(x_e)
\label{eq:lyap_equation} 
\end{align}
%
%
with $X = U_\mathcal{A} \Lambda_a U_\mathcal{A}^\mathsf{T} + W \Lambda_w W^\mathsf{T}$, 
where $\Lambda_a , \Lambda_w > 0$ are diagonal matrices and the column space of $W$ is the 
orthogonal complement of $\{ \nabla h_{a_1}(x_e), \cdots, \nabla h_{a_r}(x_e) \}$ with an inner product induced by 
$\mathcal{P}_V G > 0$, that is, $U_\mathcal{A}^\mathsf{T} \mathcal{P}_V G W = 0$. This means that $X > 0$.
%
%
Using \eqref{eq:jacobian_boundary} and \pref{prop:jac_simplified}, notice that 
$X J_{f_{cl}}(x_e) v = W \Lambda_w W^\mathsf{T} (B_V^\mathsf{T})^{-1} D B_V^\mathsf{T} J_{f_\mathcal{A}} v$, where 
again $v$ is an arbitrary vector in $\{ \nabla h_{a_1}(x_e), \cdots, \nabla h_{a_r}(x_e)\}^\perp$.
Define the Lyapunov candidate $V(v) = v^\mathsf{T} X v > 0$. Taking its time derivative and using the dynamics of $v$ yields
\begin{align}
\dot{V} &= v^\mathsf{T} \left( J_{f_{cl}}^\mathsf{T} X + X J_{f_{cl}} \right) v \nonumber \\
&= 2 v^\mathsf{T} W \Lambda_w W^\mathsf{T} (B_V^\mathsf{T})^{-1} D B_V^\mathsf{T} J_{f_\mathcal{A}} v
\label{eq:dotV_stability} 
\end{align}
Since $(B_V^\mathsf{T})^{-1} D B_V^\mathsf{T}$ is similar to $D > 0$, its eigenvalues are $p^{-1}c^{-1} > 0$ and ones.
Since $\{ \nabla h_{a_1}(x_e), \cdots, \nabla h_{a_r}(x_e)\}^\perp$ is contained in the column space of $W \Lambda_w W^\mathsf{T}$,
given any $v \in \{ \nabla h_{a_1}(x_e), \cdots, \nabla h_{a_r}(x_e)\}^\perp$, it is always possible to choose 
$W \Lambda_w W^\mathsf{T} \ge 0$ such that $v$ is one of its eigenvectors with an associated positive eigenvalue 
$\sigma > 0$.
Therefore, with this choice for $\Lambda_w > 0$ and $W$, \eqref{eq:dotV_stability} becomes 
$\dot{V} = 2 \sigma v^\mathsf{T} J_{f_\mathcal{A}} v$. 
Hence, by Chetaev's instability theorem, if there exists $v \in \{ \nabla h_{a_1}(x_e), \cdots, \nabla h_{a_r}(x_e)\}^\perp$ 
such that $v^\mathsf{T} J_{f_\mathcal{A}} v > 0$ holds, $\dot{V} > 0$ and $x_e$ is an unstable equilibrium point.
Otherwise, $\dot{V} \le 0$ and $x_e$ is stable. 
\end{proof}

Theorems \ref{theorem:existence_equilibria} and \ref{thm:curvature} show that the existence conditions 
and stability properties of boundary equilibrium points are completely determined by the vector field 
$f_{\mathcal{A}}$ and its state derivatives as defined in \eqref{eq:fa}.
For both frameworks for safety-critical control considered in this work, the stability of boundary 
equilibrium points depends on the state derivatives of the system dynamics $f$ and $g$ 
and on the Hessians of the active CBFs, $H_{h_{a_1}}, \cdots, H_{h_{a_r}}$. Particularly, for the 
safety filter QP, it also depends on the Jacobian of the nominal controller $J_{u_{nom}}(x)$, and 
for the CLF-CBF QP, it also depends on the Hessian matrix of the CLF, $H_V$.

\begin{figure}[htbp]
\centering
\vspace{-4mm}
\includegraphics[width=0.90\columnwidth]{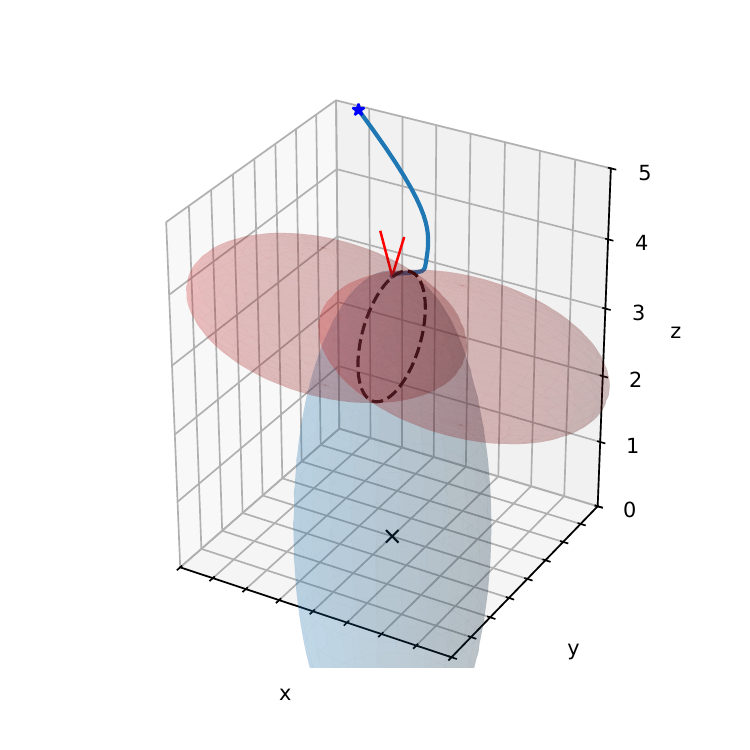}
\vspace{-8mm}
\caption{Example of a asymptotically stable equilibrium point occurring at the intersection of two quadratic CBF boundaries.}
\label{fig:equilibrium_pt}
\end{figure}

In \fref{fig:equilibrium_pt}, we show the simulation result of a safety-critical control task with the CLF-CBF QP-based controller.
\footnote{
The code repository used for producing these results is publicly available at \url{https://github.com/CaipirUltron/CompatibleCLFCBF/tree/mydevel}.}.
Here, $u_{nom} = 0$ and $H(x) = I_3$ in \eqref{eq:generalized_QP}, and the proposed system in $\mathbb{R}^3$ is 
$\dot{x}_1 = u_1 - 2 u_3$, $\dot{x}_2 = u_2$, $\dot{x}_3 = -2 u_1 + u_3$.
The CBFs are two quadratic functions $h_1$ and $h_2$ centered at the points $(-1, 0, 3)$ and $(1, 0, 3)$, respectively. 
Their boundaries are the red ellipsoids in \fref{fig:equilibrium_pt}, with the union of their interiors constituting the unsafe set which the system trajectories should avoid.
The CLF $V$ is also a quadratic centered on the origin, and its level set is shown in blue, at an asymptotically stable 
equilibrium point $x_e$ at the top of the boundary intersection $\partial \mathcal{C}_{\mathcal{A}} = \partial \mathcal{C}_1 \cap \partial \mathcal{C}_2$, 
shown in \fref{fig:equilibrium_pt} by the dashed black circle.
A trajectory converging to $x_e$ is shown, and the normalized gradients $\nabla h_1(x_e)$, $\nabla h_2(x_e)$ are shown as the two red vectors pointing up at the equilibrium point.
At $x_e$, since $f_\mathcal{A}(x_e, \lambda_e) = 0$ for some $\lambda_e = \mat{ \lambda_{e_1} \!&\! \lambda_{e_2} }^\mathsf{T} \in \mathbb{R}^2_{\ge 0}$, 
$f(x_e) = p \gamma(V(x_e)) \nabla V(x_e) - G(x_e)( \lambda_{e_1} \nabla h_1(x_e) + \lambda_{e_2} \nabla h_2(x_e) )$.
Since the system is driftless, $f(x) = 0 \,\, \forall x$. Furthermore, $G(x_e)$ is a positive definite matrix. Then, the following holds:
$p \gamma(V(x_e)) \nabla V(x_e) = \lambda_{e_1} \nabla h_1 + \lambda_{e_2} \nabla h_2$, meaning that the gradient of the CLF is a conical combination of the 
gradients of the active CBFs at $x_e$.
That is precisely the case at \fref{fig:equilibrium_pt}. Furthermore, due to the fact that the system is driftless with a full rank $G(x_e)$, 
carrying out the needed simplifications at $v^\mathsf{T} J_{f_{\mathcal{A}}}(x_e,\lambda_e) v$ from \eqref{eq:curvature_condition}, 
one can conclude that the stability of $x_e$ is determined by $\lambda_{e_1} H_{h_1} \!+\! \lambda_{e_2} H_{h_2} \!-\! p \gamma(V(x_e)) H_V$, that is, 
essentially by a difference of curvatures between the CBFs and the CLF at $x_e$, extending the result in \cite{Reis_LCSS}.



\bibliographystyle{plain}
\bibliography{root}

\end{document}